\newcommand{\suma}[1]{\sum_{{#1} \in \mathbb{Z}}}
\newcommand{\intcirc}{\int_{2\pi}}
\newcommand{\op}[1]{\hat{#1}}
\newcommand{\Tr}{\mathop{\mathrm{Tr}}\nolimits}
\newtheorem{lemma}{Lemma}
\newtheorem*{theorem}{Theorem}
\begin{document}

\title{Non-negative Wigner functions for orbital angular momentum states}

\author{I.~Rigas} 
\affiliation{Departamento de \'Optica, 
Facultad de F\'{\i}sica, Universidad Complutense, 
28040~Madrid, Spain}

\author{L.~L.~S\'{a}nchez-Soto} 
\affiliation{Departamento de \'Optica,
Facultad de F\'{\i}sica, Universidad Complutense, 
28040~Madrid, Spain}

\author{A.~B.~Klimov} 
\affiliation{Departamento de F\'{\i}sica,
Universidad de Guadalajara, 
44420~Guadalajara, Jalisco, Mexico}

\author{J.~\v{R}eh\'{a}\v{c}ek} 
\affiliation{Department of Optics,
Palacky University, 17. listopadu 50, 
772 00 Olomouc, Czech Republic}

\author{Z.~Hradil} 
\affiliation{Department of Optics, 
Palacky University, 17. listopadu 50, 
772 00 Olomouc, Czech Republic}

\date{\today}

\begin{abstract}
  The Wigner function of a pure continuous-variable quantum state is
  non-negative if and only if the state is Gaussian. Here we show that
  for the canonical pair angle and angular momentum, the only pure states
  with non-negative Wigner functions are the eigenstates of the angular
  momentum. Some implications of this surprising result are discussed.
  \end{abstract}

\pacs{03.65.Fd,03.65.Ta,03.65.Sq,03.67.-a}

\maketitle

For continuous variables, the Wigner function~\cite{Wigner:1932} is a
very useful tool that establishes a one-to-one correspondence between
quantum states and joint quasiprobability distributions of canonically
conjugate variables in phase space (position and momentum, in the
standard case).  However, it can take on negative values, a 
property that distinguishes it from a true probability
distribution~\cite{Hillery:1984,Lee:1995,QMPS:2005}.  Indeed, 
this negative character is associated with the existence of quantum 
interference, which itself may be identified as a signal of 
nonclassical behavior~\cite{Kenfack:2004lw}.

In consequence, the characterization of quantum states that are
classical, in the sense of giving rise to non-negative Wigner  
functions, is a topic of undoubted interest. Among pure states, 
it was proven in a classical paper by Hudson~\cite{Hudson:1974kb} 
(later generalized by Soto and Claverie~\cite{Soto:1983dx} to
multipartite systems) that the only states that have non-negative
Wigner functions are Gaussian
states~\cite{Janssen:1984kx,Lieb:1990yq}.  
This is one of the main reasons for the prominent role these states 
play in modern quantum information~\cite{Cerf:2007qg}.

The original definition of the Wigner function has also been extended
to discrete systems (see Ref.~\cite{Bjork:2008ab} for a comprehensive
review).  Again, the classification of states with non-negative Wigner
functions is an amazing problem that has been solved quite recently by
Paz and coworkers~\cite{Cormick:2006ty,Cormick:2006jj} and
Gross~\cite{Gross:2006py,Gross:2007nt}, so that the role of Gaussian
states is now taken on by stabilizer states.  Interestingly, these
are the only states that can be simulated efficiently in classical
computers~\cite{Gottesman:1997pd}.

Between these two cases (whose proofs are otherwise completely
different), we have the interesting situation of canonical pairs, such
as the angle and orbital angular momentum (OAM), for which one variable is
continuous while the other one is discrete~\cite{Kastrup:2006wh}.  The
associated phase space is the discrete cylinder $\mathcal{S}_1 \times
\mathbb{Z}$, where $\mathcal{S}_{1}$ stands for the unit circle
(associated to the angle) and the integers $\mathbb{Z}$ translate the
discreteness of the OAM. The physical example we have in mind is the
OAM of photons. This is an emerging field that has given rise to many
developments, ranging from optical tweezers to high-dimensional
quantum entanglement, or fundamental processes in Bose-Einstein
condensates, to cite only a few relevant examples~\cite{Allen:2003}.

The seminal paper of Allen \emph{et al.}~\cite{Allen:1992} firmly
established that the Laguerre-Gauss modes carry a well-defined
OAM. They appear as annular rings with a zero on-axis intensity and an
azimuthal dependence $\exp(i \ell \phi)$ that gives rise to spiral
wave fronts. The index $\ell$ takes only integer values and can be
seen as the eigenvalue of the OAM operator. Since then, several
methods have been established to produce light beams with the required
azimuthal phase structure, among these spiral phase plates, forked
holograms, and spatial light modulators are perhaps the most
versatile. In this way, a variety of modes with helical phase fronts
but different transverse patterns (such as Bessel, Mathieu, or
hypergeometric beams) can be routinely generated in the 
laboratory~\cite{Franke-Arnold:2008sw}.

The goal of this work is precisely to determine the pure states of
these OAM-carrying systems for which the Wigner function is
non-negative, filling in this way a long overdue gap.

To be as self-contained as possible, we first introduce some basic
notions for the problem at hand of cylindrical symmetry. We are
concerned with the planar rotations by an angle $\phi$ generated by
the angular momentum along the $z$ axis, which for simplicity will be
denoted henceforth as $\op{L}$.  We do not want to enter in a long
discussion about the possible existence of an angle
operator~\cite{Rehacek:2008ss}.  For our purposes here, the simplest
solution is to adopt two periodic angular coordinates, e.g.,
cosine and sine, that we shall denote by $\op{C}$ and $\op{S}$ to 
make no further assumptions about the angle itself.  One can concisely
condense all this information using the complex exponential of the
angle $\op{E} = \op{C} + i \op{S}$, which satisfies the commutation
relation
\begin{equation}
  \label{ELE} 
  [ \op{E},  \op{L} ] = \op{E} \, .
\end{equation}
In mathematical terms, this defines the Lie algebra of the
two-dimensional Euclidean group E(2), which is precisely the 
canonical symmetry group for the cylinder.

The action of $\op{E}$ on the  basis of eigenstates of $\op{L}$ is 
$ \op{E} | \ell \rangle = | \ell - 1 \rangle$, and it possesses then 
a simple implementation by means of a phase mask removing a charge $+ 1$ 
from a vortex state~\cite{Mair:2001,Hradil:2006}.  Since the integer 
$\ell$ runs from $ - \infty$ to $+ \infty$, $\op{E}$ is a unitary operator 
whose  eigenvectors
\begin{equation}
  \label{phi_states} 
  | \phi \rangle = \frac{1}{\sqrt{2 \pi}}
  \sum_{\ell \in \mathbb{Z}} e^{i \ell \phi} | \ell \rangle 
\end{equation}
form a complete basis and describe states with well-defined angle.  
In the representation generated by them, $\op{L}$ acts as
$-i \partial_{\phi}$ (in units of $\hbar = 1$). 

Given the key role played by the displacement operators in
settling the Wigner function for the harmonic oscillator, we
introduce a unitary displacement operator
\begin{equation}
  \label{eq:Displace1}
  \op{D} (\ell, \phi) = e^{i \alpha (\ell,\phi)} \, 
  \op {E}^{-\ell} e^{-i\phi \op{L}} \, ,
\end{equation}
where $ \alpha (\ell, \phi)$ is a phase required to avoid plugging
in extra factors when acting with $\op{D}$. The conditions of unitarity
and periodicity restrict the possible values of $\alpha$, although a
sensible choice is $\alpha (\ell, \phi) = - \ell \phi/2$.  Note that
here we cannot rewrite Eq.~(\ref{eq:Displace1}) as an entangled
exponential, since the action of the operator to be exponentiated
would not be well defined.

We use as a guide the analogy with the continuous case and introduce
the mapping~\cite{Berezin:1975}
\begin{equation}
  W_{\op{\varrho}} (\ell, \phi) = 
  \Tr [ \op{\varrho} \,\op{w} ( \ell,\phi) ] \, ,  
  \label{eq:WigFunDef1}  
\end{equation}
which maps the density operator into a Wigner function via a kernel 
$\op{w}$ defined as a double Fourier transform of the displacement 
operator~\cite{Plebanski:2000}:
\begin{equation}
  \label{eq:WigKerDef1}
  \op{w} (\ell, \phi) = 
  \frac{1}{(2\pi)^2} 
  \suma{\ell^\prime} \intcirc   
  \exp[-i ( \ell^\prime \phi - \ell \phi^\prime)] \,  
  \op{D} (\ell^\prime, \phi^\prime) \, d\phi^\prime \, ,
\end{equation}
where the integral extends to the $2\pi$ interval within which the
angle is defined. This mapping is invertible, so one can reconstruct the 
density operator as
\begin{equation}
  \label{eq:1}
  \op{\varrho} = 2 \pi \, 
 \suma{\ell} \intcirc \op{w} (\ell,\phi) \, W_{\op{\varrho}} (\ell,\phi) \, d\phi \, .
\end{equation}

The (Hermitian) Wigner kernels $ \op{w} (\ell, \phi)$ are a complete
orthonormal basis (in the trace sense) for the operators acting on the
Hilbert space of the system. In addition, they are explicitly
covariant; i.e., they transform properly under displacements, $ \op{w}
(\ell,\phi) = \op{D}(\ell,\phi) \, \op{w}(0,0) \,\op{D}^\dagger
(\ell,\phi)$.  In fact, these properties guarantee that the Wigner
function defined in Eq.(\ref{eq:WigFunDef1}) bears all the good
properties required for a probabilistic description. In particular, it
reproduces the proper marginal distributions, that is,
\begin{equation}
  \label{eq:ELMargin}
  \suma{\ell} W_{\op{\varrho}} (\ell,\phi) = 
  \langle \phi| \op{\varrho} | \phi \rangle \, ,
  \quad
  \intcirc W_{\op{\varrho}} (\ell,\phi) \,  d\phi = 
  \langle \ell| \op{\varrho} | \ell \rangle \, . 
\end{equation}
Finally, the overlap of two density operators is proportional to
the integral of the associated Wigner functions:
\begin{equation}
  \label{eq:HWProps4}
  \Tr ( \op{\varrho} \,\op{\sigma} ) \propto
  \suma{\ell}  \intcirc 
  W_{\op{\varrho}}(\ell, \phi)  W_{\op{\sigma}} (\ell, \phi) \, d\phi \, .
\end{equation}
This property (often called traciality) offers practical advantages, 
since it allows one to predict the statistics of any outcome, once the 
Wigner function of the measured state is known.

We remark that this approach to the Wigner function is grounded in the
axiomatic method developed by Stratonovich~\cite{Stratonovich:1956}
and Berezin~\cite{Berezin:1975} (see also Ref.~\cite{Brif:1998pj}). It
is possible to follow alternative routes, such as, introducing a
Wigner function as the Fourier transform of some generalized
characteristic function~\cite{Wolf:1996xr}. This has been pursued also
for the group E(2)~\cite{Nieto:1998cr}. However, these
apparently disjoint formulations turn out to be equivalent for most
practical purposes~\cite{Chumakov:2000pl}.

To give an explicit form of the Wigner function \eqref{eq:WigFunDef1}
we need to evaluate it in a basis. Using the OAM eigenstates, we get
\begin{eqnarray}
  \label{eq:ExplictKernel3} 
  W_{\op{\varrho}} (\ell,\phi)  & =  &
  \frac{1}{2\pi} \suma{\ell^\prime}  
  e^{-2 i \ell^\prime \phi} \langle \ell - \ell^\prime | \op{\varrho} 
  |\ell + \ell^\prime \rangle  \nonumber  \\
  & +  & \frac{1}{2\pi^2} \suma{\ell^{\prime},\ell^{\prime \prime}} 
  \frac{(-1)^{\ell^{\prime \prime}}}{\ell^{\prime \prime} + 1/2} 
  e^{-(2 \ell^{\prime} + 1) i \phi} \nonumber \\
  & \times &
  \langle \ell + \ell^{\prime \prime} - \ell^{\prime} | \op{\varrho}
  |\ell + \ell^{\prime \prime} + \ell^{\prime} + 1 \rangle \, .
\end{eqnarray}
This looks rather cumbersome due to the second  sum in
Eq.~(\ref{eq:ExplictKernel3}) and sometimes is preferable to work in
the angle representation, for which one easily finds
\begin{equation}
  \label{eq:WignerAngle}
  W_{\op{\varrho}} (\ell,\phi) = \frac{1}{2\pi} \int_{-\pi}^\pi \!\! 
  \langle \phi- \phi^{\prime}/2 | \op{\varrho} | \phi + \phi^{\prime}/2 \rangle 
  \, e^{i \phi^{\prime} \ell} \, d\phi^{\prime} \, .
\end{equation}
This coincides with the result of Mukunda~\cite{Mukunda:1979,Mukunda:2005} (see also Ref.~\cite{Bizarro:1994}) and bears a resemblance with the standard
Wigner function for position and momentum that is more than evident. Note that
using this latter function in terms of transverse coordinates, as is
often done in classical optics~\cite{Simon:2000jt}, is not appropriate
for the geometry of the cylinder, which is the natural domain in which
the Wigner function should be defined.

We have now all the ingredients needed to accomplish our program. In
what follows, the Fourier transform of $2 \pi$-periodic functions
(i.e., with domain in $\mathcal{S}_{1}$), defined as
\begin{equation}
  \label{eq:Lemma1a}
  (\mathcal{F} g)(k) =
  \frac{1}{2\pi} \intcirc  g(\phi) \,e^{ i\phi k } \, d\phi \, ,
\end{equation}
with $k \in \mathbb{Z}$, will play a relevant role. We first state 
our main result, which can be viewed as analogous to the Hudson theorem 
for the canonical pair angle and angular momentum.

\begin{theorem}[Classical OAM states]
The Wigner function of a pure state $|\psi\rangle$ is non-negative if and only 
if $|\psi\rangle$ is an OAM eigenstate $|\ell_0\rangle$.
\end{theorem}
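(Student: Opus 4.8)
The plan is to prove the two implications separately, the nontrivial one being that non-negativity forces an eigenstate. For the easy direction I would simply insert $\op{\varrho}=|\ell_0\rangle\langle\ell_0|$ into the angle representation~\eqref{eq:WignerAngle}. Since $\langle\phi|\ell_0\rangle\propto e^{-i\ell_0\phi}$, the matrix element reduces to a pure phase $e^{i\ell_0\phi'}$ independent of $\phi$, and the $\phi'$-integral collapses to a single Kronecker delta, giving $W_{|\ell_0\rangle}(\ell,\phi)=\tfrac{1}{2\pi}\delta_{\ell,\ell_0}$ (up to the sign convention fixed by~\eqref{phi_states}), which is manifestly non-negative.

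For the converse, the key is to read~\eqref{eq:WignerAngle} as a statement about Fourier coefficients. Writing $\psi(\phi)=\langle\phi|\psi\rangle$ and $g_\phi(\phi')=\psi^*(\phi-\phi'/2)\,\psi(\phi+\phi'/2)$, Eq.~\eqref{eq:WignerAngle} says precisely that the numbers $\{W_{|\psi\rangle}(\ell,\phi)\}_{\ell\in\mathbb{Z}}$ are the Fourier coefficients $(\mathcal{F}g_\phi)(\ell)$ of~\eqref{eq:Lemma1a}. The hypothesis $W\ge 0$ then means that, for each fixed $\phi$, the $2\pi$-periodization of $g_\phi$ has only non-negative Fourier coefficients, i.e.\ it is a positive-definite function on the circle. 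The first consequence I would extract is the classical bound $|g_\phi(\phi')|\le g_\phi(0)$. Using the angle marginal~\eqref{eq:ELMargin}, $g_\phi(0)=|\psi(\phi)|^2$, so with $\rho=|\psi|$ this reads $\rho(\phi-\phi'/2)\,\rho(\phi+\phi'/2)\le\rho(\phi)^2$ for all $|\phi'|\le\pi$; equivalently, $\log\rho$ is midpoint-concave along every arc of length up to $\pi$.

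Next I would argue that a periodic function whose logarithm is concave on every such arc must be constant: every point is interior to some admissible arc, so $(\log\rho)''\le 0$ everywhere, and integrating this around the circle forces $(\log\rho)''\equiv 0$, whence $\rho\equiv 1/\sqrt{2\pi}$ by normalization. With $\rho$ constant, $g_\phi$ has constant modulus $1/(2\pi)$, and I would feed this back into the positive-definiteness: a positive-definite function of constant modulus has at most one non-vanishing Fourier coefficient, since requiring $|g_\phi(\phi')|^2=\mathrm{const}$ makes all cross terms $W_{|\psi\rangle}(\ell,\phi)\,W_{|\psi\rangle}(m,\phi)$ with $\ell\ne m$ cancel, and being non-negative they must vanish individually. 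Thus for each $\phi$ exactly one $W_{|\psi\rangle}(\ell,\phi)$ is nonzero and equals $1/(2\pi)$; continuity in $\phi$ pins the index to a single $\ell_0$, and the reconstruction~\eqref{eq:1} (or directly the marginals) gives $\op{\varrho}=|\ell_0\rangle\langle\ell_0|$. Equivalently, $g_\phi(\phi')=\tfrac{1}{2\pi}e^{i\ell_0\phi'}$ forces the phase of $\psi$ to be linear, so $\psi(\phi)\propto e^{i\ell_0\phi}$.

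The main obstacle I anticipate is the passage from non-negativity to the log-concavity of $\rho$ and thence to its constancy: one must handle the half-angle shifts and the $2\pi$-versus-$4\pi$ periodicity of $g_\phi$ carefully when invoking positive-definiteness, and control the regularity of $\psi$. The concavity argument is cleanest for smooth, nowhere-vanishing $\psi$, so the possibility of zeros of $\psi$ (where $\log\rho=-\infty$) needs a separate treatment showing that such zeros would propagate through the midpoint inequality and contradict normalization. Once $|\psi|$ has been shown to be constant, the remaining constant-modulus-implies-single-mode step is essentially algebraic.
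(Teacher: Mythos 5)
Your proposal is correct and its overall skeleton coincides with the paper's: first force $|\psi|$ to be flat, then show that for each $\phi$ at most one $W_{|\psi\rangle}(\ell,\phi)$ can be nonzero, then pin $\ell_0$ to a single value independent of $\phi$, and finish with the marginals. The differences are in how each step is executed, and two of your substitutions are genuinely instructive. (i) Where the paper proves its Lemma~1 (non-negative Fourier coefficients imply a non-negative kernel $g(\phi-\phi^{\prime})$) and then evaluates it on a pair of Dirac combs \eqref{eq:TestFun} to extract Eq.~\eqref{eq:Key}, you invoke the classical positive-definiteness bound $|g_\phi(\phi^{\prime})| \leq g_\phi(0)$; this yields exactly the same inequality, more directly, and is in fact slightly more rigorous than the paper, which applies a smooth-test-function lemma to delta combs without a limiting argument --- your route is legitimate because the coefficients $W_{|\psi\rangle}(\ell,\phi) \geq 0$ are absolutely summable by the marginal \eqref{eq:ELMargin}. (ii) Where the paper's Lemma~2 proves $\suma{k} f(k) f^{\ast}(k+j) = 0$ via the operator $\op{A}$ of Eq.~\eqref{eq:Lemma2Proof1} and $\op{A}\op{A}^\dagger = |c|^2 \op{\openone}$, you simply expand $|g_\phi(\phi^{\prime})|^2$ in a Fourier series and observe that constancy kills the $j \neq 0$ coefficients $\suma{\ell} c_\ell c_{\ell+j}$, each term of which is non-negative and hence zero; same content, more elementary. (iii) The genuinely different step is the last one: the paper rules out $\phi$-dependence of $\ell_0$ by invoking purity through the star product \eqref{eq:Starprod2} and the resulting modular functional equation below \eqref{eq:StarCond}, whereas you use continuity of $\phi \mapsto W_{|\psi\rangle}(\ell,\phi)$ together with connectedness of $\mathcal{S}_1$: a continuous function valued in the two-point set $\{0, 1/(2\pi)\}$ is locally constant. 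This bypasses the star-product machinery entirely and is simpler, but you should justify the continuity explicitly; it does hold, since $W_{|\psi\rangle}(\ell,\phi)$ is, up to a substitution, $\frac{1}{\pi}\int \psi(\phi-u)\psi^{\ast}(\phi+u)\,e^{2i\ell u}\,du$ and translation is continuous in $L^2$. Your closing step via the $\ell$-marginal ($\langle \ell|\op{\varrho}|\ell\rangle = \delta_{\ell \ell_0}$ with $\|\psi\|=1$ forces $|\psi\rangle \propto |\ell_0\rangle$) is also cleaner than invoking the reconstruction formula \eqref{eq:1}.

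One caution: the soft spot you flag --- possible zeros of $\psi$ in the flat-modulus step --- is real, but your proposed fix is too optimistic. The two-point inequality \eqref{eq:Key} alone does \emph{not} make zeros ``propagate and contradict normalization'': a modulus that is log-concave on an arc of length $\pi$ and vanishes on the complementary arc satisfies \eqref{eq:Key} at every $\phi$ (whenever $\phi$ or one of the symmetric partners leaves the support, the right-hand side vanishes), so constancy of $|\psi|$ cannot follow from \eqref{eq:Key} by itself; excluding such states requires richer test functions than the two-teeth comb, or the full non-negativity hypothesis. You should be aware, however, that the paper's own proof makes exactly the same leap (``$|\psi(\phi)|$ cannot have any minima \ldots must thus be flat'') without comment, so your sketch is no weaker than the published argument on this point --- and more candid about it.
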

\begin{proof} 
  The sufficiency is obvious since the Wigner function for the state
  $|\ell_0\rangle$ is $W_{| \ell_{0} \rangle} (\ell,\phi) =
  \delta_{\ell \ell_0}/(2\pi)$.  The delicate point is to prove the
  necessity. Before proceeding, we sketch the idea behind the proof.
  The first step is to show that the wave function [and thus, the
  integrand in Eq.~\eqref{eq:WignerAngle}] must be of constant
  modulus. The second step is then to corroborate that the Wigner
  function can only be non-zero for a single value of
  $\ell$. Traciality permits us to derive an equation that shows that
  this value of $\ell$ cannot vary over $\phi$, and that indeed the
  only states with non-negative Wigner functions are the OAM
  eigenstates.  We start with the following lemma.
\begin{lemma}
  If the Fourier transform of a smooth, complex, $2\pi$-periodic
  function $g(\phi)$ is non-negative, then the integration kernel
  $g(\phi-\phi^{\prime})$ is non-negative.
\end{lemma}
\begin{proof}

  By a direct calculation we can check that
  \begin{equation}
    \intcirc  g (\phi - \phi^{\prime}) \,e^{-i\phi^{\prime}k}  \, 
    d\phi^{\prime}
    =  2 \pi \, (\mathcal{F} g) (k) \, e^{-i\phi k} \, ,
  \end{equation}
  so, for any smooth test function $\chi(\phi) = \suma{k} \chi(k) \,
  e^{- i\phi k}$, it holds
  \begin{equation}
    \label{eq:Lemma1b}
    \intcirc \chi^{\ast}(\phi) \, g(\phi - \phi^{\prime}) \,
    \chi(\phi^{\prime}) \, d\phi d\phi^{\prime} = 
    4\pi^2 \suma{k} |\chi(k)|^2 \, (\mathcal{F} g)(k) \, .
  \end{equation}
  It is clear that the non-negativity of the kernel  $g(\phi - \phi^{\prime})$
  follows from the non-negativity of the Fourier transform
  $(\mathcal{F}g)(k)$.
\end{proof}

We apply the lemma to
\begin{equation}
  \label{eq:TestFun}
  \chi (\phi) = \frac{1}{2} [ \delta_{2\pi} (\phi - c_1) +
  \delta_{2\pi} (\phi - c_2)  ] \, ,
\end{equation} 
Here, $\delta_{2 \pi}$ denotes the periodic delta function (or Dirac
comb) of period $2 \pi$ and $c_{1}, c_{2} \in \mathcal{S}_{1}$. For
this function we have $|\chi (k)|^2 = \{ 1+ \cos [k(c_1 - c_2)]
\}/(8\pi^2)$, so the sum in the right-hand side of
Eq.~(\ref{eq:Lemma1b}) reduces to
\begin{equation}
  g(0)/2 + [ g(c_1 - c_2) + g(c_2 - c_1) ] /4 \, .
\end{equation}
Consequently, for a function $g(\phi )$ whose Fourier transform is
non-negative, the kernel $g(\phi - \phi^{\prime})$ must also be
non-negative on the test functions (\ref{eq:TestFun}) for all the
possible parameters $c_1,c_2 \in \mathcal{S}_1$.

For a pure state $| \psi \rangle$, the Wigner function (\ref{eq:WignerAngle}) 
is just the Fourier transform of  $\psi^{\ast}(\phi+\phi^{\prime}/2) \, 
\psi(\phi-\phi^{\prime}/2)$, where we have expressed the wave functions 
in the angle representation. By Lemma 1, for the test functions
  (\ref{eq:TestFun})  the non-negativity of $W_{| \psi \rangle}$ leads to
\begin{equation}
  \label{eq:Key}
  |\psi(\phi)|^2 \geq | \psi(\phi - a/2)| \, |\psi(\phi + a/2) | \, , 
\end{equation}
with $a = c_1 - c_2$. This implies that $|\psi(\phi)|$ cannot have any
minima and the modulus of $\psi$ must thus be flat over
$\mathcal{S}_1$.

To proceed further we need a technical detail.
\begin{lemma}
  If a function $f(k): \mathbb{Z} \to \mathbb{C}$ has an inverse
  Fourier transform of constant modulus over $\phi$, then
  \begin{equation}
    \label{eq:Lemma2b}
    \suma{k} f(k) \, f^{\ast} (k+j) = 0 \qquad \forall j \neq 0 \,.
  \end{equation}
\end{lemma}
\begin{proof}
  Let us first introduce the operator
  \begin{equation}
    \label{eq:Lemma2Proof1}
    \op{A} = \suma{m,k} f(m-k) \, |m \rangle \langle k| \, .
  \end{equation}
  One can check that it can be expressed in a diagonal form in the
  angle basis, namely
  \begin{equation}
    \label{eq:Lemma2Proof2}
    \op{A} = \intcirc | \phi \rangle \langle \phi| \,
    (\mathcal{F}^{-1} f) (-\phi) \, d\phi \, .
  \end{equation}

  If $(\mathcal{F}^{-1} f) (\phi)$ has constant modulus, it
  can be written as  $(\mathcal{F}^{-1} f) (\phi) = c \,
    e^{i\lambda (\phi)}$, 
  where $\lambda$ is a real function.  Therefore, we have $ \op{A} \,
  \op{A}^\dag =  |c|^2 \, \op{\openone} $. 
  But according to the definition~(\ref{eq:Lemma2Proof1}), this is
  tantamount to the orthogonality relation
  \begin{equation}
    \nonumber
    \suma{m, k} \suma{m^{\prime}, k^{\prime}} \langle n | m \rangle 
    \langle k| f(m - k) |k^{\prime} \rangle \langle m^{\prime} |
    f^{\ast} (m^{\prime} - k^{\prime}) |n + j \rangle  = 0 \, .
  \end{equation}
  The Plancherel formula allows one to cancel the diagonal parts, so we
  are led to
  \begin{equation}
    \label{eq:Lemma2Proof6}
    \suma{k} f(n-k) \,f^{\ast}(n+j-k)  = 0 \, ,
  \end{equation}
  whence the result follows.
\end{proof}

Next, for every $\phi$, we consider the Wigner function of the state
as a function exclusively of the discrete index $\ell$; that is,
$f_\phi(\ell) = W_{| \psi \rangle} (\ell,\phi): \mathbb{Z} \to
\mathbb{R}$ (in fact, $W$ is real valued), and make use of the fact
that the (inverse) Fourier transform of $f_\phi(\ell)$ has a constant
modulus over $\phi$.  Then, by Lemma 2, the orthogonality
\begin{equation}
  \label{eq:WignerOrthogonE2}
  \suma{\ell} f_\phi(\ell) \,f^{\ast}_\phi (\ell + \ell^{\prime}) = 0 \, , 
  \qquad \forall \ell^{\prime} \neq 0 \, ,
\end{equation}
must hold for all $\phi \in \mathcal{S}_1$. But since $f$ is
non-negative on the whole phase-space, this is only possible if $f$ is
equal to zero for all but one $\ell_0$. Note that, in principle,
$\ell_0$ may depend on $\phi$.  Taking into account the marginal
distribution \eqref{eq:ELMargin}, we see that $W (\ell,\phi) =
\delta_{\ell \ell_0 (\phi)} /(2\pi)$.

We now make use of the fact that the state $|\psi \rangle$ is pure
[that is, $\Tr (\op{\varrho}^{2}) = 1$]. From the traciality property,
one can show that the Wigner function representing the product of two
density operators $\op{\varrho}$ and $\op{\sigma}$ can be expressed as
\begin{eqnarray}
  \label{eq:Starprod2}
  &\displaystyle   W_{\op{\varrho} \, \op{\sigma}} (\ell,\phi) = 
  \frac{1}{2\pi} \suma{\ell_1 \,\ell_2} \intcirc 
  \,W_{\op{\varrho}} (\ell +\ell_1,
  \phi + \psi_1/2) &  \nonumber \\
  & \times \,
  W_{\hat\sigma}(\ell+\ell_2,\phi+\psi_2/2) \, e^{i(\ell_2
    \psi_1 - \ell_1 \psi_2)}  \, d\psi_1 d\psi_2 \, . &
\end{eqnarray}
We apply this to the pure state $|\psi\rangle$ whose Wigner function
is of the form $\delta_{\ell \ell_0 (\phi)}/(2\pi)$.
  
Without loss of generality, we can assume that $\ell_0 (\phi =0) = 0$
and may revert this choice later by a displacement $|\psi\rangle \to
\hat D( \ell_0,0)|\psi\rangle$. Then, Eq.~\eqref{eq:Starprod2} becomes
\begin{eqnarray}
  \label{eq:StarCond}
  & \displaystyle
  W_{|\psi\rangle} (0,0) = \frac{1}{2\pi} & \nonumber \\ 
  & \displaystyle
  = \frac{1}{(2\pi)^3} \intcirc   
  e^{i[\ell_0(\psi_2/2) \psi_1 -\ell_0(\psi_1/2)\psi_2 ]} \, d\psi_1 d\psi_2 \, . &
\end{eqnarray}
This means that the integral of the imaginary part must vanish, while
the integral of the real part must be equal $(2\pi)^2$.  This is only
possible if the exponential is exactly one for all the arguments
$(\psi_1,\psi_2)$; i.e., $ \ell_0 (\psi_1 /2) \,\psi_2 = \ell_0
(\psi_2/2)\,\psi_1 \, \bmod{2\pi}$.  This is only possible when
$\ell_0 \equiv 0 $.
\end{proof}

We have shown that if the Wigner function of a pure state is
non-negative, then it is necessarily a Kronecker delta, and thus
stems from an OAM eigenstate, which concludes the long yet instructive
proof of our theorem.

It is worth stressing that for the continuous case the notions of
coherent states, Gaussian wave packets, and states with non-negative
Wigner functions (often identified as nonclassical states) are
completely equivalent. However, special care must be paid in extending
these ideas to other physical systems like OAM, since they lose their
equivalence.

For example, OAM coherent states $|\ell_{0}, \phi_{0} \rangle$ in the
cylinder~\cite{Kowalski:1996} can be expressed in the angle
representation by
\begin{equation}
  \langle \phi| \ell_0 , \phi_0 \rangle  = 
  \frac{e^{i \ell_0 (\phi - \phi_0)}}
  {\sqrt{\vartheta_3 \left ( 0  \big | \frac{1}{e} \right )}}
  \vartheta_3\left(\frac{\phi-\phi_{0}}{2} \Big | \frac{1}{e^2} \right) \, ,
  \nonumber
\end{equation}
where $\vartheta_3$ denotes the third Jacobi theta function.  However,
despite the key role played by this function in angular problems, a
simple calculation~\cite{Rigas:2008ac} immediately reveals that the
Wigner function for them takes negative values.

In the same vein, the states
\begin{equation}
  \label{centered_mises}
  \Psi_{\kappa} ( \phi ) = \frac{1}{\sqrt{2 \pi I_0(2 \kappa)}} 
  \exp (\kappa \cos\phi ) \, , 
\end{equation}
whose associated probability distribution is precisely the von Mises
distribution~\cite{Rehacek:2008ss}, are usually taken as Gaussians
for this problem. One can easily check that their Wigner function 
also takes negative values.  

Even with all these cautions, the characterization we have presented
of OAM eigenstates as the only ones with non-negative Wigner function
has interest in its own, although, unfortunately, they cannot be
viewed as Gaussian states.
 
A topic of interest is the characterization of unitaries that
preserve the non-negativity. Obviously, all the displacement operators
are of this kind. But the exponential of an arbitrary real
function $ f(\op{L})$ also preserves non-negativity and this includes
quadratic exponentials, which are essential for a full quantum
reconstruction of vortex states~\cite{Rigas:2008ac}.

Finally, let us mention that a question that naturally arises is
whether our result can be extended to mixed states.  Although this
question has been approached by using the notion of the Wigner
spectrum~\cite{Brocker:1995mw} and explored quite recently for
continuous variables~\cite{Mandilara:2009vy}, in our case a simple
extension seems difficult and will be the object of our future work.

We acknowledge discussions with Hubert de Guise, Jos\'e Gracia-Bond{\'i}a,
and Hans Kastrup. This work was supported by the Spanish Research
Directorate, Grants FIS2005-06714 and FIS-2008-04356, the Mexican
Consejo Nacional de Ciencias y Tecnolog\'{\i}a (CONACyT), Grant45704,
and the Czech Ministry of Education, Projects MSM6198959213 and
LC06007.


\end{document}